\documentclass[runningheads,a4paper]{llncs}

\usepackage{epsfig}
\usepackage{graphics}
\usepackage{amsmath,setspace}

\usepackage[dvipsnames,usenames]{color}
\usepackage[colorlinks=true,urlcolor=Blue,citecolor=Green,linkcolor=BrickRed]{hyperref}
\urlstyle{same}

\newcommand{\x}{{x}}

\newcommand{\G}{{\mathcal G}}

\begin{document}
\mainmatter
%
%
\title{Approximating the Diameter of Planar Graphs\\ in Near Linear Time}

\author{Oren Weimann \and Raphael Yuster}

\institute{University of Haifa, Israel\\ \href{mailto:oren@cs.haifa.ac.il}{oren@cs.haifa.ac.il},
 \href{mailto:raphy@math.haifa.ac.il}{raphy@math.haifa.ac.il}}

\date{}
\maketitle

\begin{abstract}
We present a $(1+\varepsilon)$-approximation algorithm running in ${O}(f(\varepsilon) \cdot n \log^4 n)$ time for finding the diameter of an undirected planar graph with $n$ vertices and with non-negative edge lengths. 
\end{abstract}
%

\section{Introduction}

The diameter of a graph is the largest distance between two vertices. Computing it is among the most fundamental algorithmic graph problems. 
%
%
In general weighted graphs, as well as in planar graphs, the only known way to compute the diameter is to essentially solve the (more general) All-Pairs Shortest Paths (APSP) problem and then take the pair of vertices with the largest distance.

In general weighted graphs with $n$ vertices and $m$ edges, solving APSP (thus diameter) currently requires $\tilde O(n^3)$ time. The fastest algorithm to date is $O(n^3(\log\log n)/ \log^2n)$ by Han and Takaoka~\cite{Han}, or for sparse graphs $O(mn + n^2\log n)$ by Johnson~\cite{Johnson}, with a small improvement to $O(mn + n^2\log\log n)$  \cite{pettie-2002}.

In weighted {\em planar} graphs, solving APSP can be done in $O(n^2)$ time by Frederickson~\cite{Frederickson}. While this is  optimal for APSP, it is not clear that it is optimal for diameter. Currently, only a logarithmic factor improvement by Wulff-Nilsen~\cite{WN-diam} is known for the diameter, running in  $O(n^2(\log\log n)^4/ \log n)$ time. A long standing open problem~\cite{Chung}  is to find the diameter in truly  subquadratic $O(n^{2-\varepsilon})$ time. 
Eppstein~\cite{Eppstein} has shown that if the diameter in a planar graph is bounded by a fixed constant, then it can be found in $O(n)$ time.
Fast algorithms are also known for some simpler classes of graphs like outer-planar graphs~\cite{16}, interval graphs~\cite{23}, and others~\cite{Chordal,13}.

In lack of truly subcubic-time algorithms for general graphs and truly subquadratic   time algorithms for planar graphs it is natural to seek faster algorithms that {\em approximate} the diameter.  
It is easy to approximate the diameter within a factor of 2 by simply computing a Single-Source Shortest Path (SSSP) tree from any vertex in the graph and returning
twice the depth of the deepest node in the tree.  This requires $O(m+n\log n)$ time for general graphs and $O(n)$ time for planar graphs~\cite{HKRS97}. For general graphs, Aingworth et al.~\cite{Aingworth} improved the approximation factor from 2 to 3/2 at the cost of $\tilde{O}(m\sqrt{n}+n^2)$ running time, and 
Boitmanis et al.~\cite{5} gave an additive approximation factor of $O(\sqrt{n})$ with  $\tilde O(m\sqrt{n})$ running time. For planar graphs, the current best approximation is a 3/2-approximation by Berman and Kasiviswanathan running in $O(n^{3/2})$ time~\cite{WADS}. We improve this to a $(1+\varepsilon)$-approximation running in  $\tilde O(n)$ time for any fixed $0<\varepsilon < 1$. More precisely, we prove the following theorem:

\begin{theorem}\label{theorem:main}
Given an undirected planar graph with $n$ vertices, non-negative edge lengths, and diameter $d$. For any $\varepsilon >0$ we can compute an approximate diameter $d'$ (where $d \le d'  \le (1+\varepsilon)\cdot d$) in time $O(n \log^4 n/\varepsilon^4 +n \cdot2^{O(1/\varepsilon)})$.
\end{theorem}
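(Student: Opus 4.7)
My plan is to combine a cheap 2-approximation of the diameter $d$ (which converts a fixed additive error into a multiplicative $(1+\varepsilon)$ error) with a recursive shortest-path-separator decomposition of the planar graph and a portal discretization of the separator paths, so that the maximum distance across each separator can be read off a short rounded profile attached to every vertex. First I would run one SSSP from an arbitrary vertex to get a value $D$ with $D\le d\le 2D$ and fix the resolution $\delta:=\varepsilon D/c$ for a suitable constant $c$; since the diameter pair has distance at least $D$, any additive error $O(\delta)$ in the final estimate lies within a $(1+\varepsilon)$ multiplicative factor of $d$.

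Next I would recurse on the graph via a Thorup-style shortest-path separator: in each subgraph $G'$, find $O(1)$ shortest paths in an SSSP tree of $G'$ whose removal bisects $G'$. On each separator path $P$ I would place $k=O(1/\varepsilon)$ \emph{portals} spaced $\delta$ apart; this is enough because $|P|\le d\le 2D$. Running a linear-time planar SSSP~\cite{HKRS97} from each portal yields, for every $v\in G'$, a distance vector $(d(v,p_1),\dots,d(v,p_k))$, and for any pair $(u,v)$ whose shortest path crosses $P$ the quantity $\min_i\bigl(d(u,p_i)+d(p_i,v)\bigr)$ overestimates $d(u,v)$ by at most the portal spacing $\delta$.

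To extract $\max_{(u,v)\text{ across }S}\min_i\bigl(d(u,p_i)+d(p_i,v)\bigr)$ in near-linear total time I would round each coordinate of the distance vector to a multiple of $\delta$. Because adjacent portals are at distance $\delta$ along $P$, consecutive rounded entries differ by an element of $\{-1,0,1\}$; subtracting the first coordinate, only $3^{O(1/\varepsilon)}=2^{O(1/\varepsilon)}$ distinct profile \emph{shapes} survive. I would then bucket vertices by (side-of-$S$, shape), so that the combining step compares $2^{O(1/\varepsilon)}$ pairs of buckets at $O(1/\varepsilon)$ per pair; recursing on the two sides gives $O(\log n)$ balanced levels with total vertex mass $O(n\log n)$, which yields a skeleton running time of $O(n\log n/\varepsilon)+n\cdot 2^{O(1/\varepsilon)}$.

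The main obstacle will be squeezing out the exact $O(n\log^4 n/\varepsilon^4)$ factor in the theorem: the extra $\log^3 n/\varepsilon^3$ must come from the balanced shortest-path separator construction inside an SSSP tree, from replacing direct SSSPs-from-portals by an approximate planar distance oracle so that each separator costs $\mathrm{polylog}\,n/\varepsilon^2$ distance queries per vertex rather than $\Theta(1/\varepsilon)$ SSSPs per subproblem, and from the care needed to ensure that the $2^{O(1/\varepsilon)}$ bucketing cost is charged against the number of subproblems (totalling $O(n)$) and not the vertex mass per level. I expect the delicate technical point to be verifying that the approximation guarantee is preserved through all these substitutions—in particular that the rounded profiles composed across recursive levels never inflate the additive error beyond $O(\varepsilon D)$.
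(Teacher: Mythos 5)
Your first three ingredients --- the one-shot $2$-approximation $D$ that turns additive error $O(\varepsilon D)$ into a $(1+\varepsilon)$ factor, the shortest-path separator with $O(1/\varepsilon)$ portals, and the rounding of distance-to-portal vectors into $2^{O(1/\varepsilon)}$ profile shapes compared bucket-against-bucket --- are exactly the paper's, and that part of your argument is sound (one small repair: in a recursive subgraph the separator paths need \emph{not} have length $\le 2D$, since their far endpoints may be irrelevant to the diameter; the paper restricts portals to an $8\x$-prefix and proves the crossing vertex must lie there). The genuine gap is in the recursion itself. ``Recursing on the two sides'' does not work as stated, for two reasons. First, a shortest path between two vertices $u,v$ that both lie on the same side of the separator may cross into the other side and come back, so the induced subgraph $G_{in}$ does not even approximately preserve $u$-to-$v$ distances; you must graft some surrogate of $G_{out}$ onto $G_{in}$ before recursing. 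Second, the separator is a pair of shortest paths with up to $\Theta(n)$ vertices, and it must be kept in both subproblems, so the subproblems need not shrink and your $O(\log n)$-depth, $O(n\log n)$-mass accounting fails.

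The paper resolves both with one construction you are missing: it places $O(\log n/\varepsilon)$ \emph{dense} portals at spacing $\varepsilon \x/(16\log n)$ on the separator, computes all dense-portal-to-dense-portal shortest paths in $G_{out}$, compresses their union (after suppressing degree-$2$ vertices) to $O((\log n/\varepsilon)^4)$ vertices, glues this as an exterior onto $G_{in}$, and contracts the separator down to the dense portals. This simultaneously makes the subproblem size $\le \tfrac{2}{3}n + O(\log^4 n/\varepsilon^4)$ and caps the per-level distortion at a $(1+\varepsilon/(2\log n))$ factor, which compounds over the $\le 1.8\log n$ levels to at most $1+\varepsilon$ --- precisely the error-accumulation issue you flag at the end but leave unresolved; with only your coarse $\varepsilon D$-spaced portals the accumulated error would be $\Theta(\varepsilon D\log n)$. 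Finally, your guess about the source of the $\log^4 n/\varepsilon^4$ factor (an approximate distance oracle replacing the SSSP calls) is not what happens: it comes from the dense-portal machinery and from the base case, where recursion stops at components of size $(256\log n/\varepsilon)^4$ and exact APSP costs $O(n\log^4 n/\varepsilon^4)$ in total.
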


\paragraph{\bf Summary of the Algorithm.}

A lemma of Lipton and Tarjan~\cite{LiptonTarjan79} states that, for any SSSP tree $T$ in a planar graph $G$, there is a non-tree edge $e$ (where $e$ might possibly be a
non-edge of the planar graph) such that the 
strict interior and strict exterior of the unique simple cycle $C$ in $T\cup \{e\}$ each contains at most $2/3 \cdot n$ vertices.
The vertices of $C$ therefore form a {\em separator} consisting of two shortest paths with the same common starting vertex.

Let $G_{in}$ (resp. $G_{out}$) be the subgraph of $G$ induced by $C$ and all interior (resp. exterior) vertices to $C$. Let  $d(G_{in},G_{out},G)$ denote the largest distance in the graph $G$ between a {\em marked} vertex in $V(G_{in})$ and a {\em marked} vertex in $V(G_{out})$. In the beginning, all vertices of $G$ are marked and we seek the diameter which is $d(G,G,G)$. We use a divide and conquer algorithm that first approximates $d(G_{in},G_{out},G)$, then unmarks all vertices of $C$, and then recursively approximates $d(G_{in},G_{in},G)$ and $d(G_{out},G_{out},G)$ and takes the maximum of all three. We outline this algorithm below. Before running it, we compute an SSSP tree from any  vertex using the linear-time SSSP algorithm of Henzinger et al.~\cite{HKRS97}. The depth of the deepest node in this tree already gives a 2-approximation to the diameter $d(G,G,G)$. Let $\x$ be the obtained value such that $\x \le d(G,G,G) \le 2\x$.

\paragraph{Reduce $d(G_{in},G_{out},G)$ to $d(G_{in},G_{out}, G_t)$ in a tripartite graph $G_t$:}
The separator $C$ is composed of two shortest paths $P$ and $Q$ emanating from the same vertex, but that are otherwise disjoint. 
We carefully choose a subset of  $16/\varepsilon$ vertices from $C$ called {\em portals}. The first (resp. last) $8/\varepsilon$ portals are  all part of the prefix of $P$ (resp. $Q$) that is of length $8\x$. 
The purpose of the portals is to approximate a shortest $u$-to-$v$ path for $u \in G_{in}$ and $v \in G_{out}$ by forcing it to go through a portal. Formally, we construct a 
tripartite graph $G_t$ with vertices $(V(G_{in}), portals, V(G_{out}))$. The length of edge $(u\in V(G_{in}),v\in portals)$ or $(u\in portals,v\in V(G_{out}))$ in  $G_t$ is the $u$-to-$v$ distance in $G$. This distance is computed by running the SSSP algorithm of~\cite{HKRS97} from each of the $16/\varepsilon$ portals. By the choice of portals, we show that 
 $d(G_{in},G_{out},G_t)$  is  a $(1+2\varepsilon)$-approximation of $d(G_{in},G_{out},G)$.

\paragraph{Approximate $d(G_{in},G_{out},G_t)$:}
If $\ell$ is the maximum edge-length of $G_t$, then note that $d(G_{in},G_{out},G_t)$ is between $\ell$ and $2\ell$.
This fact makes it possible to round the edge-lengths of $G_t$ to be in $\{1, 2, \ldots, 1/\varepsilon\}$ so that $\varepsilon \ell \cdot d(G_{in},G_{out},G_t)$ after rounding is a $(1+2\varepsilon)$-approximation  to $d(G_{in},G_{out},G_t)$ before rounding. For any fixed $\varepsilon$ we can assume without loss of generality that $1/\varepsilon$ is an integer. This means that after rounding $d(G_{in},G_{out},G_t)$ is bounded by some fixed integer.
We give a linear-time algorithm to compute it exactly, thus approximating $d(G_{in},G_{out},G)$. We then unmark all vertices of $C$ and move on to recursively approximate $d(G_{in},G_{in},G)$ (the case of $d(G_{out},G_{out},G)$ is symmetric).
 
 \paragraph{Reduce  $d(G_{in},G_{in},G)$ 
to $d(G_{in},G_{in},G_{in}^+)$  in a planar graph $G_{in}^+$ of size at most $ 2/3 \cdot n$:} 
In order to apply recursion, we construct planar graphs $G_{in}^+$ and  $G_{out}^+$ (that is constructed similarly to $G_{in}^+$). The size of each of these graphs will be at most $2/3 \cdot n$ and their total size  $n+o(n)$.
We would like $G_{in}^+$ to be such that $d(G_{in},G_{in},G_{in}^+)$  is a 
 $(1+\varepsilon/(2\log n))$-approximation\footnote{$\log n =\log_2 n$ throughout the paper.} to {$d(G_{in},G_{in},G)$. 

To construct $G_{in}^+$, we first choose a subset of  $256\log n/\varepsilon$ vertices from $C$ called {\em dense portals}. We then compute all  $O((256\log n/\varepsilon)^2)$ shortest paths in $G_{out}$ between dense portals. The graph $B'$ obtained by the union of all these paths has at most $O((256\log n/\varepsilon)^4)$ vertices of degree $> 2$. We contract vertices of degree $= 2$ so that the number of vertices in  $B'$ decreases to $O((256\log n/\varepsilon)^4)$. Appending this small graph $B'$ (after unmarking all of its vertices) as an exterior to $G_{in}$ results in a graph $G_{in}^+$ that has $|G_{in}|+O((256\log n/\varepsilon)^4)$ vertices and $d(G_{in},G_{in},G_{in}^+)$ is a $(1+\varepsilon/(2\log n))$-approximation of $d(G_{in},G_{in},G)$. 

The problem is still that  the size of $G_{in}^+$ is not necessarily bounded by $2/3 \cdot n$. This is because $C$ (that is part of $G_{in}^+$) can be as large as $n$. 
We show how to shrink $G_{in}^+$ to size roughly $2/3 \cdot n$ while $d(G_{in},G_{in},G_{in}^+)$ remains a $(1+\varepsilon/(2\log n))$-approximation of $d(G_{in},G_{in},G)$. 
To achieve this, we shrink the $C$ part of $G_{in}^+$ so that it only includes the dense portals without changing $d(G_{in},G_{in},G_{in}^+)$. 

\paragraph{Approximate $d(G_{in},G_{in},G_{in}^+)$:}
Finally, once $|G_{in}^+|\le 2/3 \cdot n$ we  apply recursion to $d(G_{in},G_{in},G_{in}^+)$. In the halting condition, when $|G_{in}^+|\le (256\log n /\varepsilon)^4$, we naively compute $d(G_{in},G_{in},G_{in}^+)$ using APSP.

\paragraph{\bf Related Work.}
The use of shortest-path separators and portals to approximate distances in planar graphs was first suggested in the context of {\em approximate distance oracles}. These are data structures that upon query $u,v$ return a  $(1 + \varepsilon)$-approximation of the $u$-to-$v$ distance. 
Thorup~\cite{Thorup04} presented an $O(1 /\varepsilon \cdot n \log n)$-space oracle answering queries in $O(1/\varepsilon)$ time on directed weighted planar graphs. Independently, Klein~\cite{Klein2002} achieved these same bounds for undirected planar graphs. 

In distance oracles, we need distances between every pair of vertices and each vertex is associated with a possibly different set of portals. In our diameter case however, since we know the diameter is between $x$ and $2x$, it is possible to associate all vertices with the exact same set of portals. This fact is crucial in our algorithm, both for its running time and for its use of rounding.
Another important distinction between our algorithm and distance oracles is that  distance oracles upon query $(u,v)$ can inspect all recursive subgraphs that include both $u$ and $v$. We on the other hand must have that, for every $(u,v)$, the shortest $u$-to-$v$ path exists (approximately) in the unique subgraph where $u$ and $v$ are separated by $C$. This fact necessitated our construction  of  $G_{in}^+$ and $G_{out}^+$.

\section{The Algorithm}

In this section we give a detailed description of an algorithm that approximates the diameter of an undirected weighted planar graph $\G=(V,E)$ in the bounds of Theorem~\ref{theorem:main}. The algorithm computes a $(1+\varepsilon)$-approximation of the diameter $d=d(G,G,G)$ for $G=\G$. This means it returns a value $d'$ where  $d\le d'  \le (1+\varepsilon)\cdot d$  (recall that, before running the algorithm, we compute a value $\x$ such that $ \x \le d \le 2\x$ by computing a single-source shortest-path tree from an arbitrary vertex in $\G$). 
We focus on approximating the value of the diameter. An actual path of length $d'$ can be found in the same time bounds. For simplicity we will assume that shortest paths are unique. This can always be achieved by adding random infinitesimal
weights to each edge, and can also be achieved deterministically using lexicographic-shortest paths (see, e.g.,~\cite{HartvigsenMardon}).
Also, to simplify the presentation, we  assume that  $\varepsilon \le 0.1$ and we describe a  $(1+7\varepsilon)$-approximation. (then just take $\varepsilon' = \varepsilon/7$).

The algorithm is recursive and actually solves the more general problem of finding the largest distance only between all pairs of {\em marked vertices}. In the beginning, we mark all $n=|V(G)|$ vertices of $G=\G$ and set out to approximate $d(G,G,G)$ (the largest distance in $G$ between marked vertices in $V(G)$).
Each recursive call approximates the largest distance in a specific subset of marked vertices, and then unmarks some vertices before the next recursive call. We make sure that whatever the endpoints of the actual diameter are, their distance is approximated in some recursive call. Finally, throughout the recursive calls, we maintain the invariant that the distance between any two {\em marked} vertices in the graph $G$ of the recursive call is a $(1+\varepsilon)$-approximation of their distance in the original graph $\G$ (there is no guarantee on the marked-to-unmarked or the unmarked-to-unmarked distances).
 We denote by $\delta_{\G}(u,v)$ the $u$-to-$v$ distance in the original graph $\G$. 
 
The recursion is applied according to a variant of the shortest-path separator decomposition for
planar graphs by Lipton and Tarjan~\cite{LiptonTarjan79}: We first pick any {\em marked} vertex $v_1$ and compute in linear time the SSSP tree from $v_1$ in $G$. In this tree, we can find in linear time two shortest paths $P$ and $Q$ (both emanating from $v_1$) such that removing the vertices of  $C= P \cup Q$ from $G$
results in  two disjoint planar subgraphs $A$ and $B$ (i.e., there are no edges in $V(A)\times V(B)$).
The number of vertices of $C$ can be as large as $n$ but it is guaranteed that $|V(A)|\le 2/3 \cdot n$ and $|V(B)|\le 2/3 \cdot n$. 
Notice that the paths $P$ and $Q$ might share a common prefix. It is common to not include this shared prefix in $C$. However, in our case, we must have the property that $P$ and $Q$ start at a {\em marked} vertex. So we  include in $C$ the shared prefix as well. See Fig.~\ref{separator} (left).

Let $G_{in}$  (resp. $G_{out}$) be the subgraph of $G$ induced by $V(C) \cup V(A)$  (resp. $V(C) \cup V(B)$). In order to approximate $d(G,G,G)$, we first compute a $(1+5\varepsilon)$-approximation $d_1$ of $d(G_{in},G_{out},G)$ (the largest distance in $G$ between the marked vertices of $V(G_{in})$ and the marked vertices of $V(G_{out})$).
In particular,  $d_1$  takes into account all $V(C) \times V(G)$ distances.
We can therefore unmark all the vertices of $C$ and
move on to approximate $d_2= d(G_{in},G_{in},G)$ (approximating  $d_3 = d(G_{out},G_{out},G)$ is done similarly). 
We approximate $d(G_{in},G_{in},G)$ by applying recursion on $d(G_{in},G_{in},G_{in}^+)$ where $|V(G_{in}^+)| \le 2/3 \cdot n$. The marked vertices in $G_{in}^+$ and in $G_{in}$ are the same and $d(G_{in},G_{in},G_{in}^+)$ is a $(1+\varepsilon/(2\log n))$-approximation of $d(G_{in},G_{in},G)$. This way, the diameter grows by a multiplicative factor of $(1+ \varepsilon/(2\log n))$  in each recursive call. Since the recursive depth is $O(\log n)$ (actually, it is never more than $1.8\log n$) we get a  $(1+5\varepsilon)\cdot (1+\varepsilon)\le (1+7\varepsilon)$-approximation $d_2$ to $d(G_{in},G_{in},G)$.
Finally, we return  $\max\{d_1,d_2,d_3\}$.

\subsection{Reduce $d(G_{in},G_{out},G)$ to $d(G_{in},G_{out}, G_t)$}
Our goal is now to approximate $d(G_{in},G_{out},G)$. For $u\in G_{in}$ and $v\in G_{out}$, we approximate a shortest $u$-to-$v$ path in $G$  by forcing it to go through a {\em portal}. In other words, consider a shortest $u$-to-$v$ path. It is is obviously composed of a shortest $u$-to-$c$ path in $G$ concatenated with a shortest $c$-to-$v$ path in $G$ for some vertex $c \in C$. We approximate the shortest $u$-to-$v$ path by insisting that $c$ is a portal. The fact that we only need to consider $u$-to-$v$ paths that are of length between $\x$ and $2\x$ makes it possible to choose the same portals for all vertices.

We now describe how to choose the portals in  linear time. Recall that the separator $C$ is composed of two shortest paths $P$ and $Q$ emanating from the same {\em marked} vertex $v_1$. 
The vertex $v_1$ is chosen as the first  portal. Then, for $i=2,3,\ldots$ we start from $v_{i-1}$ and walk on $P$ until we reach the first vertex $v$ whose distance from $v_{i-1}$ via $P$ is greater than $\varepsilon \x$. We designate $v$ as the portal $v_i$ and continue to $i+1$. We stop the process when we encounter a vertex $v$ whose distance from $v_1$ is greater than $8\x$. This guarantees that at most $8/\varepsilon$ portals are chosen from the shortest path $P$ and they are all in a prefix of $P$ of length at most $8\x$. This might seem counterintuitive as we know that any shortest path $P$ in the original graph $\G$ is of length at most $2\x$. However, since one endpoint of $P$ is not necessarily marked, it is possible that $P$ is a shortest path in $G$ but not even an approximate shortest path in the original graph $\G$. We do the same for $Q$, and we get a total  of $16/\varepsilon$ portals. See Fig.~\ref{separator} (right).

\begin{figure}[h!]
   \centering
   \includegraphics[scale=0.18]{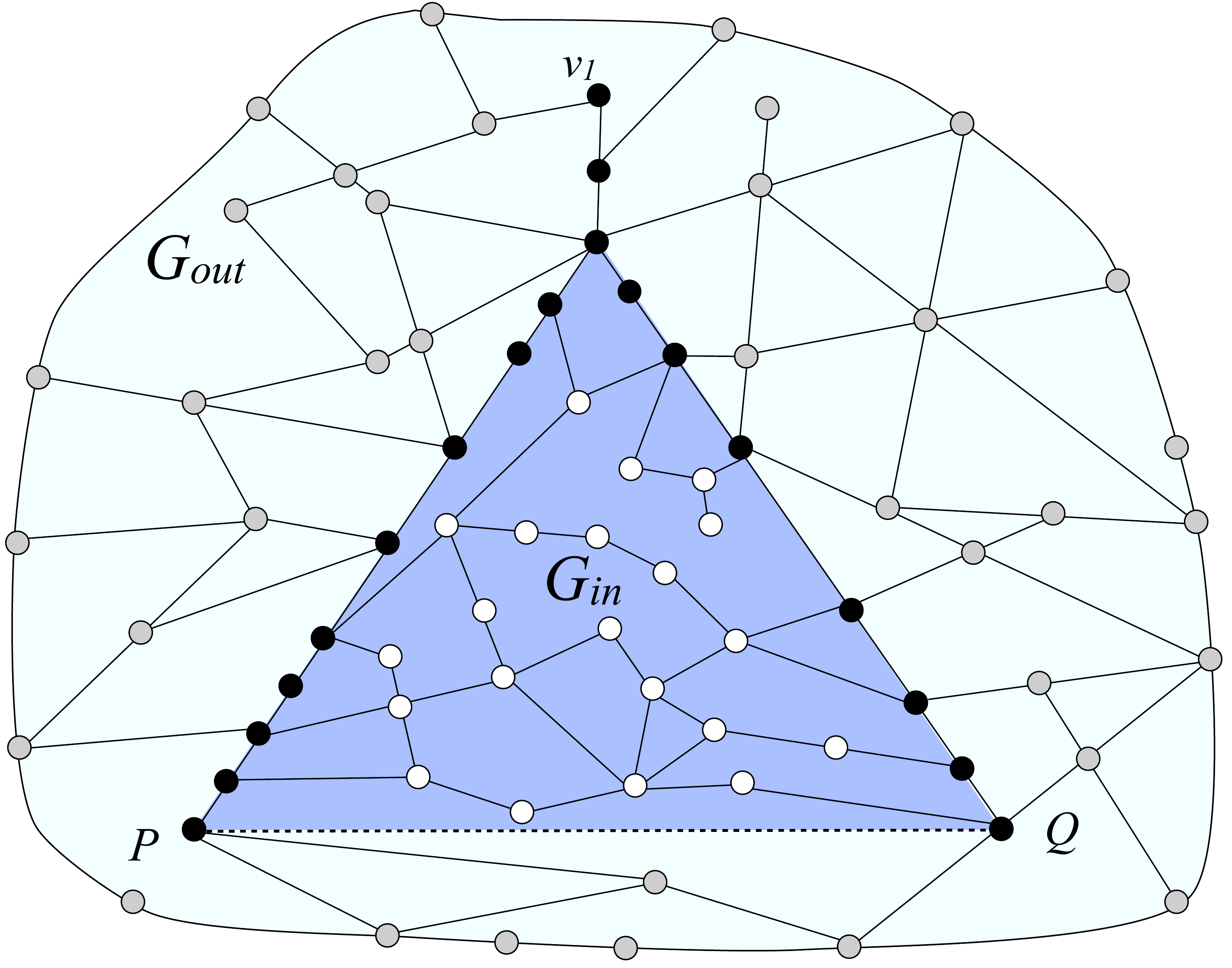}\ \ \ \ \ \
    \includegraphics[scale=0.18]{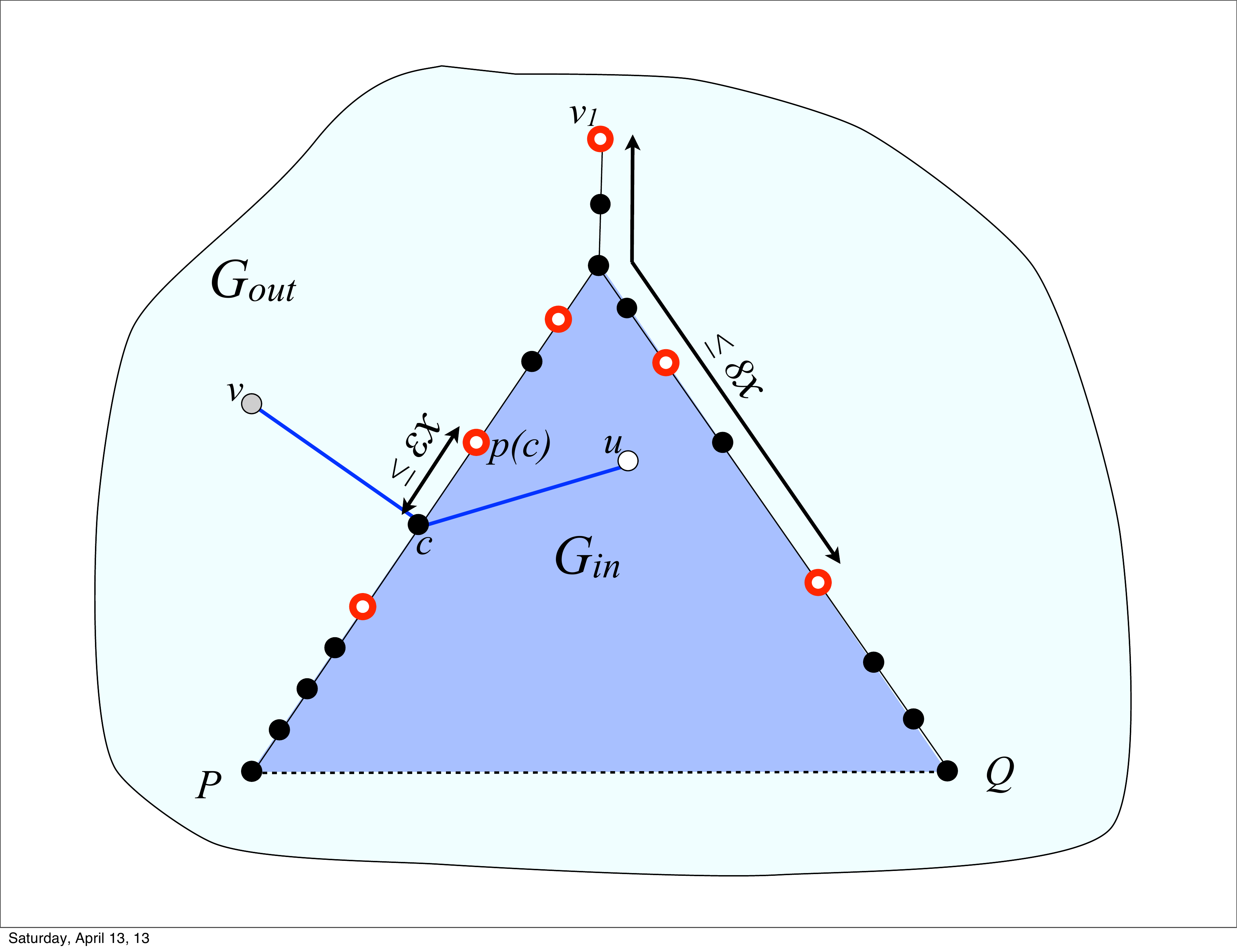}
   \caption{Two illustrations of a weighted undirected planar graph $G$. On the left: The black nodes constitute the shortest path separator $C$ composed of two shortest paths $P$ and $Q$ emanating from the same vertex $v_1$. The subgraph of $G$ induced by the white (resp. gray) nodes is denoted $A$ (resp. $B$). The graph $G_{in}$ (resp. $G_{out}$) is the subgraph induced by $A\cup C$ (resp. $B\cup C$). On the right: The six circled vertices are the $16/\varepsilon$ portals in the $8\x$ prefixes of $P$ and $Q$. The shortest path between $u$ and $v$ goes through the separator vertex $c$ and is approximated by the $u$-to-$\lambda(c)$ and the $\lambda(c)$-to-$v$ shortest paths where $\lambda(c)$ is the closest portal to $c$. The distance from $c$ to $\lambda(c)$ is at most $\varepsilon x$. }
  \label{separator}
 \end{figure}

Once we have chosen the portals, we move on to construct a tripartite graph $G_t$ whose three vertex sets (or columns) are $(V(G_{in}), portals, V(G_{out}))$. The length of edge $(u\in V(G_{in}),v\in portals)$ or $(u\in portals,v\in V(G_{out}))$ is the $u$-to-$v$ distance in $G$. This distance is computed by running the linear-time SSSP algorithm of Henzinger et al.~\cite{HKRS97} in $G$ from each of the $16/\varepsilon$ portals in total $O(1/\varepsilon\cdot |V(G)|)$ time. The following lemma states that our choice of portals implies that $d(G_{in},G_{out},G_t)$ is a good approximation of $d(G_{in},G_{out},G)$. 

\begin{lemma}\label{lemma:G_t}
If $d(G_{in},G_{out},G) \ge \x$, then 
$d(G_{in},G_{out},G_t)$  is  a $(1+2\varepsilon)$-approximation of $d(G_{in},G_{out},G)$. Otherwise, $d(G_{in},G_{out},G_t)\le (1+2\varepsilon) \x$.
\end{lemma}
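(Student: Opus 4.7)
The plan is to prove the two inequalities $d(G_{in},G_{out},G)\le d(G_{in},G_{out},G_t)\le d(G_{in},G_{out},G)+2\varepsilon\x$ and then case-split on the size of $d(G_{in},G_{out},G)$. The lower bound $d(G_{in},G_{out},G)\le d(G_{in},G_{out},G_t)$ is an easy triangle-inequality argument: every $u$-to-$v$ path in $G_t$ has the form $u\to p\to v$ for some portal $p$, and its length is $\delta_G(u,p)+\delta_G(p,v)\ge \delta_G(u,v)$ by definition of the edge weights in $G_t$. Taking maxima over marked pairs proves the lower bound.

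For the upper bound, let $u^\ast\in V(G_{in})$ and $v^\ast\in V(G_{out})$ be marked vertices realising $d(G_{in},G_{out},G)$, and consider a shortest $u^\ast$-to-$v^\ast$ path $\pi$ in $G$. Since $V(A)$ and $V(B)$ are disconnected in $G-V(C)$, the path $\pi$ must contain some vertex $c\in V(C)$. The crux of the argument is to show that $c$ lies in the $8\x$-prefix of either $P$ or $Q$ along which portals were laid out, so that there is a portal $c'$ with $\delta_G(c,c')\le \varepsilon\x$. Assuming this, a $u^\ast\to c'\to v^\ast$ walk in $G_t$ has length
\[
\delta_G(u^\ast,c')+\delta_G(c',v^\ast)\le \delta_G(u^\ast,c)+\delta_G(c,v^\ast)+2\delta_G(c,c')\le d(G_{in},G_{out},G)+2\varepsilon\x,
\]
yielding $d(G_{in},G_{out},G_t)\le d(G_{in},G_{out},G)+2\varepsilon\x$.

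The step that needs care is bounding $\delta_G(v_1,c)\le 8\x$. Here I would invoke the invariant that marked-to-marked distances in $G$ are $(1+\varepsilon)$-approximations of the corresponding distances in the original graph $\G$, whose diameter is at most $2\x$. Since $v_1$, $u^\ast$, $v^\ast$ are all marked, this gives $\delta_G(v_1,u^\ast)\le 2(1+\varepsilon)\x\le 2.2\x$ and $\delta_G(u^\ast,v^\ast)\le 2.2\x$. Combined with $\delta_G(u^\ast,c)\le \delta_G(u^\ast,v^\ast)$ (since $c$ lies on $\pi$) and the triangle inequality, we get $\delta_G(v_1,c)\le 4.4\x<8\x$. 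Because $P$ (resp. $Q$) is a shortest path out of $v_1$ in $G$, the distance from $v_1$ to $c$ measured along the separator coincides with $\delta_G(v_1,c)$, so $c$ indeed lies in the prefix on which portals were placed; the portal-construction rule then provides the desired portal $c'$ with $\delta_G(c,c')\le \varepsilon\x$.

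Finally, I would split on $d(G_{in},G_{out},G)$: if $d(G_{in},G_{out},G)\ge \x$, then $2\varepsilon\x\le 2\varepsilon\cdot d(G_{in},G_{out},G)$, so the additive inequality upgrades to the multiplicative bound $d(G_{in},G_{out},G_t)\le (1+2\varepsilon)\,d(G_{in},G_{out},G)$, completing the first statement. Otherwise $d(G_{in},G_{out},G)<\x$ and the additive bound directly yields $d(G_{in},G_{out},G_t)\le \x+2\varepsilon\x=(1+2\varepsilon)\x$, which is the second statement. The main obstacle is really the prefix-length bound; everything else is triangle-inequality bookkeeping.
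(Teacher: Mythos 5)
Your proposal follows essentially the same route as the paper: the lower bound via the correspondence between $G_t$-paths and $G$-paths, the location of the crossing vertex $c$ in the $8\x$-prefix via the invariant $\delta_G(\cdot,\cdot)\le 2(1+\varepsilon)\x$ for marked pairs and the fact that $P$ is a shortest path from $v_1$, the snap to the nearest portal within $\varepsilon\x$, and the final case split on whether $d(G_{in},G_{out},G)\ge\x$.

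One quantification slip needs fixing. You prove the additive bound $\delta_{G_t}(u^\ast,v^\ast)\le\delta_G(u^\ast,v^\ast)+2\varepsilon\x$ only for the pair $(u^\ast,v^\ast)$ that realizes the maximum in $G$, and then conclude $d(G_{in},G_{out},G_t)\le d(G_{in},G_{out},G)+2\varepsilon\x$. That conclusion does not follow, since the maximum in $G_t$ may be attained by a different marked pair whose $G_t$-distance you have not bounded. The paper avoids this by starting from the pair realizing $d(G_{in},G_{out},G_t)$ and bounding its $G_t$-distance by its $G$-distance plus $2\varepsilon\x$, then using $\delta_G(u,v)\le d(G_{in},G_{out},G)$. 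Your argument repairs itself immediately, because every step (the invariant, the bound $\delta_G(v_1,c)\le 4.4\x<8\x$, the portal snap) applies verbatim to an arbitrary marked pair $u\in V(G_{in})$, $v\in V(G_{out})$; running it for all such pairs and taking the maximum gives the claimed bound. State it that way and the proof is complete.
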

\begin{proof}

The first thing to notice is that $d(G_{in},G_{out},G_t)\ge d(G_{in},G_{out},G)$. This is because every shortest $u$-to-$v$ path in $G_t$ between a marked vertex  $u\in V(G_{in})$ of the first column and a marked vertex $v\in V(G_{out})$ of  the third column corresponds to an actual $u$-to-$v$ path in $G$.

We now show that $d(G_{in},G_{out},G_t)\le (1+2\varepsilon) \cdot d(G_{in},G_{out},G)$. 
We begin with some notation. Let $P_t$ denote the shortest  path in $G_t$ realizing $d(G_{in},G_{out},G_t)$. 
The path $P_t$ is a shortest $u$-to-$v$ path for some {\em marked} vertices $u\in G_{in}$ and $v\in G_{out}$. The length of the path $P_t$ is  $\delta_{G_t}(u,v)$.
Let $P_G$ denote the shortest  $u$-to-$v$ path in $G$ that is of length  $\delta_{G}(u,v)$ and let $P_{\G}$ denote the shortest  $u$-to-$v$ path in the original graph $\G$ that is of length  $\delta_{\G}(u,v)$. Recall that we have the invariant that in every recursive level for every pair of marked vertices $\delta_G(u,v) \le (1+\varepsilon)\cdot \delta_{\G}(u,v)$. We also have that  $\delta_{\G}(u,v)\le 2\x$ and so  $\delta_G(u,v) \le  2\x\cdot (1 +\varepsilon) $. For the same reason, since $v_1$ (the first vertex of both $P$ and $Q$) is also marked, we know that $\delta_G(v_1,u)$ is  of length at most $2\x\cdot (1+\varepsilon)$. 

The path $P_G$ must include at least one vertex $c \in C$. Assume without loss of generality that $c \in P$. We claim that $c$ must be a vertex in the prefix of $P$ of length $8\x$. Assume the converse, then the $v_1$-to-$c$ prefix of $P$ is of length at least $8\x$. Since $P$ is a shortest path in $G$, this means that $\delta_G(v_1,c)$ is at least $8\x$.
However, consider the  $v_1$-to-$c$ path composed of the $v_1$-to-$u$ shortest path (of length $\delta_G(v_1,u)\le 2\x\cdot (1+\varepsilon)$) concatenated with the $u$-to-$c$ shortest path (of length $\delta_G(u,c)\le \delta_G(u,v) \le  2\x \cdot (1 +\varepsilon)$). Their total length is 
$4\x\cdot (1+\varepsilon)$ which is less than $8\x$ (since $\varepsilon <1$) thus contradicting our assumption.

After establishing that $c$ is somewhere in the $8\x$ prefix of $P$, we now want to show that $\delta_{G_t}(u,v)\le (1 +2\varepsilon)\cdot \delta_{G}(u,v)$. Let $\lambda(c)$ denote a closest portal to $c$ on the path $P$. Notice that by our choice of portals and since $c$ is in the $8\x$ prefix of $P$ we have that $\delta_G(c,\lambda(c)) \le \varepsilon \x$. By the triangle inequality we know that $\delta_G(u,\lambda(c)) \le \delta_G(u,c) + \delta_G(c,\lambda(c)) \le \delta_G(u,c) + \varepsilon \x$ and similarly $\delta_G(\lambda(c),v)  \le \delta_G(c,v) + \varepsilon \x$. This means that 
\begin{align*}
d(G_{in},G_{out},G_t) &= \delta_{G_t}(u,v)  \\
&\le \delta_{G}(u,\lambda(c)) + \delta_{G}(\lambda(c),v) \\ 
&\le  \delta_G(u,c) + \delta_G(c,v) + 2\varepsilon \x \\
&=   \delta_G(u,v) + 2\varepsilon \x \\
&\le d(G_{in},G_{out},G) + 2\varepsilon  \x \\
&\le  (1+2\varepsilon) \cdot d(G_{in},G_{out},G),
\end{align*}

\noindent 
where in the last inequality we assumed that $d(G_{in},G_{out},G) \ge \x$. Note that if $d(G_{in},G_{out},G) < \x$, then $d(G_{in},G_{out},G_t) \le (1+2\varepsilon) \cdot \x$. The lemma follows. \qed
\end{proof}

By Lemma~\ref{lemma:G_t}, approximating $d(G_{in},G_{out},G)$ when $d(G_{in},G_{out},G) \ge \x$ reduces to approximating $d(G_{in},G_{out},G_t)$. 
The case of $d(G_{in},G_{out},G) < \x$ means that the diameter $d$ of the original graph $\G$ is {\em not} a  $(u\in G_{in})$-to-$(v\in G_{out})$ path.
This is because $d\ge \x > d(G_{in},G_{out},G) \ge d(G_{in},G_{out},\G)$. 
So $d$ will be approximated in a different recursive call (when the separator separates the endpoints of the diameter). In the meanwhile, we 
will get that $d(G_{in},G_{out},G_t)$ is  at most  $(1+2\varepsilon) \cdot \x$ and so it will not compete with the correct recursive call when taking the maximum.

\subsection{Approximate $d(G_{in},G_{out},G_t)$}\label{ss:gt}
In this subsection, we show how to approximate the diameter in the tripartite graph $G_t$. We give a $(1+ 2\varepsilon)$-approximation for $d(G_{in},G_{out},G_t)$. By the previous subsection, this means we have a 
$(1+2\varepsilon)(1+2\varepsilon) < (1+ 5\varepsilon)$-approximation for $d(G_{in},G_{out},G)$. From the invariant that distances in $G$ between marked vertices are a $(1+\varepsilon)$-approximation of these distances in the original graph $\G$, we get a $(1+5\varepsilon)(1+\varepsilon) < (1+ 7\varepsilon)$-approximation for $d(G_{in},G_{out},\G)$ in the original graph $\G$.

We now present our $(1+ 2\varepsilon)$-approximation for 
$d(G_{in},G_{out},G_t)$ in the tripartite graph $G_t$. Recall that $P_t$ denotes the shortest path in $G_t$ that realizes $d(G_{in},G_{out},G_t)$. By the definition of $G_t$, we know that the  path $P_t$ is composed of only two edges: (1) edge $(u,p)$ between a marked vertex $u$ of the first column (i.e., $u\in V(G_{in})$) and a vertex $p$ of the second column (i.e., $p$ corresponds to some portal in $G$). (2) edge $(p,v)$ between $p$ and a marked vertex $v$ of the third column (i.e., $v\in V(G_{out})$).

Let $X$ (resp. $Y$) denote the set of all edges in $G_t$ adjacent to marked vertices of the first (resp. third) column. Let 
$\ell$ denote the maximum edge-length over all edges in $X\cup Y$. Notice  that $\ell \le d(G_{in},G_{out},G_t) \le 2\ell$. We round up the lengths of all edges in $X\cup Y$ to the closest multiple of $\varepsilon \ell$. The rounded edge-lengths are thus all in $\{\varepsilon \ell, 2\varepsilon \ell,3\varepsilon \ell,\ldots, \ell\}$. We denote $G_t$ after rounding as $G'_t$. Notice that $d(G_{in},G_{out},G'_t)$ is a $(1+2\varepsilon)$-approximation of $d(G_{in},G_{out},G_t)$. This is because the path $P_t$ is of length at least $\ell$ and is composed of two edges, each one of them has increased its length by at most $\varepsilon \ell$.

We now show how to compute $d(G_{in},G_{out},G'_t)$ exactly in linear time. We first divide all the edge-lengths of $G'_t$ by $\varepsilon \ell$ and get that $G'_t$  has edge-lengths in $\{1, 2,3,\ldots, 1/\varepsilon\}$. After finding $d(G_{in},G_{out},G'_t)$  (which is now a constant) we simply multiply the result by $\varepsilon \ell$. The following lemma states that when the diameter is constant it is possible to compute it exactly in linear time. Note that we can't just use Eppstein's~\cite{Eppstein} linear-time algorithm because it works only on planar graphs and in our case we get a {\em non-planar} tripartite graph $G'_t$.

\begin{lemma}\label{lemma:G'_t}
$\!d(G_{in},G_{out},G'_t)$  can be computed in time $O(|V(G)| / \varepsilon +2^{O(1/\varepsilon)})$.
\end{lemma}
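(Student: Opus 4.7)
The plan is to exploit the fact that every path realizing $d(G_{in},G_{out},G'_t)$ has exactly two edges (one from $V(G_{in})$ to a portal, one from a portal to $V(G_{out})$), and that each of these edge-lengths lies in the tiny set $\{1,2,\dots,1/\varepsilon\}$. Thus every marked $u\in V(G_{in})$ is characterized, for the purpose of this computation, by its \emph{portal-profile} $\vec{a}(u)=(w(u,p_1),\dots,w(u,p_{16/\varepsilon}))\in\{1,\dots,1/\varepsilon\}^{16/\varepsilon}$, and similarly every marked $v\in V(G_{out})$ is characterized by $\vec{b}(v)$. The shortest $u$-to-$v$ path in $G'_t$ goes through a single portal, so
\[
\delta_{G'_t}(u,v)=\min_{1\le i\le 16/\varepsilon}\bigl(\vec{a}(u)_i+\vec{b}(v)_i\bigr),
\]
which depends only on $\vec{a}(u)$ and $\vec{b}(v)$, not on $u$ or $v$ themselves.

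The first step of the algorithm is therefore to read off $\vec{a}(u)$ for each marked $u\in V(G_{in})$ and $\vec{b}(v)$ for each marked $v\in V(G_{out})$, and to bucket the marked vertices by their profiles. Each vector has $16/\varepsilon$ coordinates from an alphabet of size $1/\varepsilon$, so using a radix sort on the profiles (or hashing them into a trie of depth $16/\varepsilon$) the bucketing takes $O(|V(G)|/\varepsilon)$ time. Let $\mathcal{A}$ and $\mathcal{B}$ denote the sets of distinct profiles that actually appear on each side; trivially
\[
|\mathcal{A}|,\,|\mathcal{B}|\le (1/\varepsilon)^{16/\varepsilon}=2^{O(1/\varepsilon)}.
\]

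The second step is to enumerate all pairs $(\vec{a},\vec{b})\in\mathcal{A}\times\mathcal{B}$ of non-empty buckets and, for each such pair, compute $\min_i(\vec{a}_i+\vec{b}_i)$ in $O(1/\varepsilon)$ time by a single scan over the portals. Returning the maximum of these values (scaled back by $\varepsilon\ell$) yields $d(G_{in},G_{out},G'_t)$ exactly. The total cost of this phase is
\[
|\mathcal{A}|\cdot|\mathcal{B}|\cdot O(1/\varepsilon)\;=\;2^{O(1/\varepsilon)},
\]
and adding the $O(|V(G)|/\varepsilon)$ bucketing cost gives the claimed bound $O(|V(G)|/\varepsilon+2^{O(1/\varepsilon)})$.

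There is no real obstacle here beyond verifying that the two phases separate cleanly: the main point I would emphasize in the write-up is that the \emph{number of portals is a constant depending only on} $\varepsilon$, so all profile-dependent work is pushed into the $2^{O(1/\varepsilon)}$ additive term, and the only $n$-dependent work is the linear-time scan that assigns each marked vertex to its bucket. Non-planarity of $G'_t$, which precludes invoking Eppstein's algorithm, is precisely circumvented by this reduction: we never actually run a shortest-path computation on $G'_t$, we only evaluate the closed-form expression for $\delta_{G'_t}$ on representatives of the finitely many equivalence classes.
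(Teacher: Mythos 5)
Your reduction is the same one the paper uses: characterize each marked vertex by its tuple of (rounded, scaled) distances to the portals, bucket vertices by tuple, and evaluate $\min_i(\vec{a}_i+\vec{b}_i)$ only on pairs of distinct tuples. The two phases do separate cleanly and the $O(|V(G)|/\varepsilon)$ bucketing cost is fine. However, there is a genuine gap in your counting step: the identity $(1/\varepsilon)^{16/\varepsilon}=2^{O(1/\varepsilon)}$ is false. Writing $k=1/\varepsilon$, you have $k^{16k}=2^{16k\log k}=2^{O(\varepsilon^{-1}\log\varepsilon^{-1})}$, so the naive bound on the number of distinct profiles only yields a pair-enumeration cost of $k^{O(k)}$, not the $2^{O(1/\varepsilon)}$ claimed in the lemma. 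Your argument therefore proves a strictly weaker running-time bound.

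The missing idea is the paper's second counting argument, which shows that the number of profiles that can \emph{actually occur} is $2^{O(k)}$ even though the number of syntactically possible tuples is $k^{O(k)}$. The point is that the portals are not arbitrary: they lie consecutively on a prefix of the shortest path $P$ (resp.\ $Q$) of length at most $8\x$, which after dividing by $\varepsilon\ell$ has length at most $16k$ (using $\x\le 2\ell$). Re-encode each tuple by its first entry together with the consecutive differences $a_i=|\delta(v,v_{i+1})-\delta(v,v_i)|$ and their signs. By the triangle inequality $a_i\le\delta(v_i,v_{i+1})$, so $\sum_i a_i\le 16k$, and the number of nonnegative integer vectors with sum at most $16k$ is $2^{O(k)}$; multiplying by the $k$ choices for the first entry and the $2^{j-1}$ sign patterns keeps the count at $2^{O(k)}$. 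Without this geometric input your equivalence-class count, and hence the additive term in the running time, is off by a $\log(1/\varepsilon)$ factor in the exponent.
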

\begin{proof}
Recall that in $G'_t$ we denote the set of all edges adjacent to marked vertices of the first  and third column as $X$ and $Y$. The length of each edge in $X\cup Y$ is in $\{1, 2,\ldots, k\}$ where $k=1/\varepsilon$. The number of edges in $X$ (and similarly in $Y$) is at most $16k\cdot |V(G)|$. This is because the first column contains $|G_{in}| \le |V(G)|$ vertices and the second column contains $j\le 16k$ vertices $v_1, v_2,\ldots, v_j$ (the portals). 
For every marked vertex $v$ in the first (resp. third) column, we store a $j$-tuple $v_X$ (resp. $v_Y$) containing the edge lengths from $v$ to all vertices of the second column. In other words, the $j$-tuple $v_X = \langle \delta(v,v_1), \delta(v,v_2),\ldots, \delta(v,v_j) \rangle$ where  every $\delta(v,v_i) \in \{1, 2,\ldots, k\}$ is the length of the edge $(v,v_i)$. The total number of tuples is $O(k\cdot |V(G)|)$ but the total number of {\em different} tuples is only $t = k^{O(k)}$ since each tuple has $O(k)$ entries and each entry is in $\{1, 2,\ldots, k\}$.

We create two binary vectors $V_X$ and $V_Y$ each of length $t$. The $i$'th bit of $V_X$ (resp. $V_Y$) is 1 iff the $i$'th possible tuple exists as some  $v_X$ (reps. $v_Y$). Creating these vectors takes  $O( k\cdot |V(G)|)= O( |V(G)| / \varepsilon)$ time. Then, for every 1 bit in $V_X$ (corresponding to a tuple of vertex $u$ in the first column) and every 1 bit in $V_Y$ (corresponding to a tuple of vertex $v$ in the third column)  we compute the $u$-to-$v$ distance in $G'_t$ using the two tuples in time $O(16k)$. We then return the maximum of all such $(u,v)$ pairs. 
Notice that a 1 bit can correspond to several vertices that have the exact same tuple. We arbitrarily choose any one of these.  
There are $t$ entries in $V_X$ and $t$ entries in $V_Y$ so there are  
$O(t^2)$ pairs of 1 bits. Each pair is examined in $O(16k)$ time for a total of $O(k t^2)=k^{O(k)}$ time.

To complete the proof we now show that this last term $O(k t^2)$ is not only $k^{O(k)}$ but actually $2^{O(k)}$. For that we claim that the total number of different tuples is $t = 2^{O(k)}$. We assume for simplicity (and w.l.o.g.) that all portals  $v_1,\ldots,v_j$ are on the separator $P$.
We encode a $j$-tuple $v_X = \langle \delta(v,v_1),\ldots, \delta(v,v_j) \rangle$ by a $(2j-1)$-tuple $v'_X$: The first entry of $v'_X$ is $\delta(v,v_1)$. The next $j-1$ entries are $|\delta(v,v_{i+1}) - \delta(v,v_i)|$ for  $i=1,\ldots, j-1$. Finally, the last $j-1$ entries are single bits where the $i$'th bit is 1 if $\delta(v,v_{i+1}) - \delta(v,v_i) \ge 0$  and 0 if $\delta(v,v_{i+1}) - \delta(v,v_i) < 0$.

We will show that the number of different $(2j-1)$-tuples $v'_X$ is $2^{O(k)}$. There are $k$ options for the first entry of $v'_X$ and two options (0 or 1) for each of the last  $j-1$ entries. We therefore only need to show that there are at most $2^{O(k)}$ possible $(j-1)$-tuples $\langle a_1,a_2,\ldots,  a_{j-1} \rangle$ where $a_i = |\delta(v,v_{i+1}) - \delta(v,v_i)|$.  First notice that  since $\delta(v,v_{i+1})$ and $\delta(v,v_i)$ correspond to distances, by the triangle inequality we have $a_i= |\delta(v,v_{i+1}) - \delta(v,v_i)| \le \delta(v_i,v_{i+1})$. We also know that $\delta(v_1,v_j) \le 8\x/\varepsilon \ell$ since all portals lie on a prefix of $P$ of length at most $8\x$ and we scaled the lengths by dividing by $\varepsilon \ell$. We get that $\sum_{i=1}^{j-1}{ a_i} \le 8\x/\varepsilon \ell  \le 16k$. In the last inequality we used the fact that $x\le 2\ell$, if $x > 2\ell$, then we ignore this recursive call altogether (the diameter will be found in another recursive call).  
To conclude, observe that the number of possible vectors $\langle a_1,a_2,\ldots,  a_{j-1} \rangle$ where every $a_i$ is non-negative and $\sum{ a_i} \le 16k$ is at most $2^{O(k)}$.  
\qed
\end{proof}

\noindent 
To conclude, we have so far seen how to  obtain a $(1+ 5\varepsilon)$-approximation for $d(G_{in},G_{out},G)$ implying a $(1+ 7\varepsilon)$-approximation for $d(G_{in},G_{out},\G)$ in the original graph $\G$. The next step is to unmark all vertices of $C$ and move on to recursively approximate $d(G_{in},G_{in},G)$ (approximating $d(G_{out},G_{out},G)$ is done similarly).

 \subsection{Reduce  $d(G_{in},G_{in},G)$ 
to $d(G_{in},G_{in},G_{in}^+)$}\label{ss:gplus}

In this subsection we show how to recursively obtain a $(1+ 5\varepsilon)$-approximation of $d(G_{in},G_{in},G)$ and recall that this implies a $(1+ 7\varepsilon)$-approximation of $d(G_{in},G_{in},\G)$ in the original graph $\G$ since we will make sure to maintain our invariant that, at any point of the recursion,
distances between marked vertices are a $(1+\varepsilon)$-approximation of these distances in the original graph $\G$.

It is important to note that our desired construction can be obtained with similar guarantees using the construction of Thorup~\cite{Thorup04} for distance oracles. However, we present here a simpler construction than \cite{Thorup04} since, as apposed to distance oracles that require {\em all-pairs} distances, we can afford to only consider distances that are between $x$ and $2x$.

There are two problems with applying recursion to solve $d(G_{in},G_{in},G)$. The first is that $|V(G_{in})|$ can be as large as $|V(G)|$ and we need it to be  at most  $2/3\cdot |V(G)|$. We do know however that the number of  {\em marked} vertices in $V(G_{in})$ is at most $2/3 \cdot  |V(G)|$ . The second problem is that it is possible that the $u$-to-$v$ shortest path in $G$ for $u,v\in G_{in}$ includes vertices of $G_{out}$. This  only happens if the $u$-to-$v$ shortest path in $G$ is composed of a shortest $u$-to-$p$  path ($p\in P$) in $G_{in}$, a  shortest $p$-to-$q$  path ($q\in Q$) in $G_{out}$, and a shortest $q$-to-$v$  path in $G_{in}$.
To overcome these two problems, we construct a planar graph $G_{in}^+$ that 
has at most $2/3  \cdot  |V(G)|$ vertices and $d(G_{in},G_{in},G_{in}^+)$  is a 
 $(1+\varepsilon/(2\log n))$-approximation to {$d(G_{in},G_{in},G)$. 
 
Recall that the subgraph $B$ of $G$ induced by all vertices in the strict exterior of the separator $C$  is such that $|B|\le 2/3 \cdot  |V(G)|$ and $G_{out}= B\cup C$. 
The construction of $G_{in}^+$ is done in two phases. In the first phase, we replace the  $B$ part of $G$ with a graph $B'$ of polylogarithmic size. In the second phase, we contract the $C$ part of $G$ to polylogarithmic size.
 
\paragraph{\bf Phase I: replacing ${\boldsymbol B}$ with ${\boldsymbol B'}$.}
To construct $G_{in}^+$, we first choose a subset of $256  \log n /\varepsilon$ vertices from $C$ called {\em dense portals}. The dense portals are chosen similarly to the regular portals but there are more of them.
The marked vertex $v_1$ (the first vertex of both $P$ and $Q$) is chosen as the first dense portal. Then, for $i=2,\ldots,128\log n/\varepsilon$  we start from $v_{i-1}$ and walk on $P$ until we reach the first vertex whose distance from $v_{i-1}$ via $P$ is greater than $\varepsilon \x/(16\log n)$. We set this vertex as the dense portal $v_i$ and continue to $i+1$. We do the same for $Q$, for a total  of $256\log n/\varepsilon$ dense portals. 

After choosing the dense portals, we compute all  $O((256\log n/\varepsilon)^2)$ shortest paths in $G_{out}$ between dense portals. This can be done using SSSP from each portal   in total $O(|V(G_{out})|\cdot\log n/\varepsilon)$ time. It can also be done using the Multiple Source Shortest Paths (MSSP) algorithm of Klein~\cite{Klein05} in total $O(|V(G_{out})|\cdot\log n + \log^2 n/\varepsilon^2)$ time. 

Let $B'$ denote the graph obtained by the union of all these dense portal to dense portal paths in $G_{out}$. 
Notice that since these are shortest paths, and since we assumed shortest paths are unique, then every two paths can share at most one consecutive subpath. The endpoints of this subpath are of degree $>2$. There are only $O((256\log n/\varepsilon)^2)$ paths so this implies that 
the graph $B'$ has at most $O((256\log n/\varepsilon)^4)$ vertices of degree $> 2$. 
We can therefore contract vertices of degree $= 2$. The number of vertices of $B'$ then decreases to $O((256\log n/\varepsilon)^4)$, it remains a planar graph, and its edge lengths correspond to subpath lengths. 

We then unmark all vertices of $B'$ and append $B'$ to the infinite face of $G_{in}$. In other words, we take the disjoint union of $G_{in}$ and $B'$ and
identify the dense portals of $G_{in}$ with the dense portals of $B'$.
This results in a graph $G_{in}^+$ that has $|V(G_{in})|+O((256\log n/\varepsilon)^4)$ vertices. In Lemma~\ref{lemma:G_in^+} we will show that $d(G_{in},G_{in},G_{in}^+)$ can serve as a $(1+\varepsilon/(2\log n))$-approximation to $d(G_{in},G_{in},G)$. But first we will shrink $G_{in}^+$ so that the number of its vertices is bounded by $2/3 \cdot  |V(G)|$.

\begin{figure}[h!]
   \centering
   \includegraphics[scale=0.19]{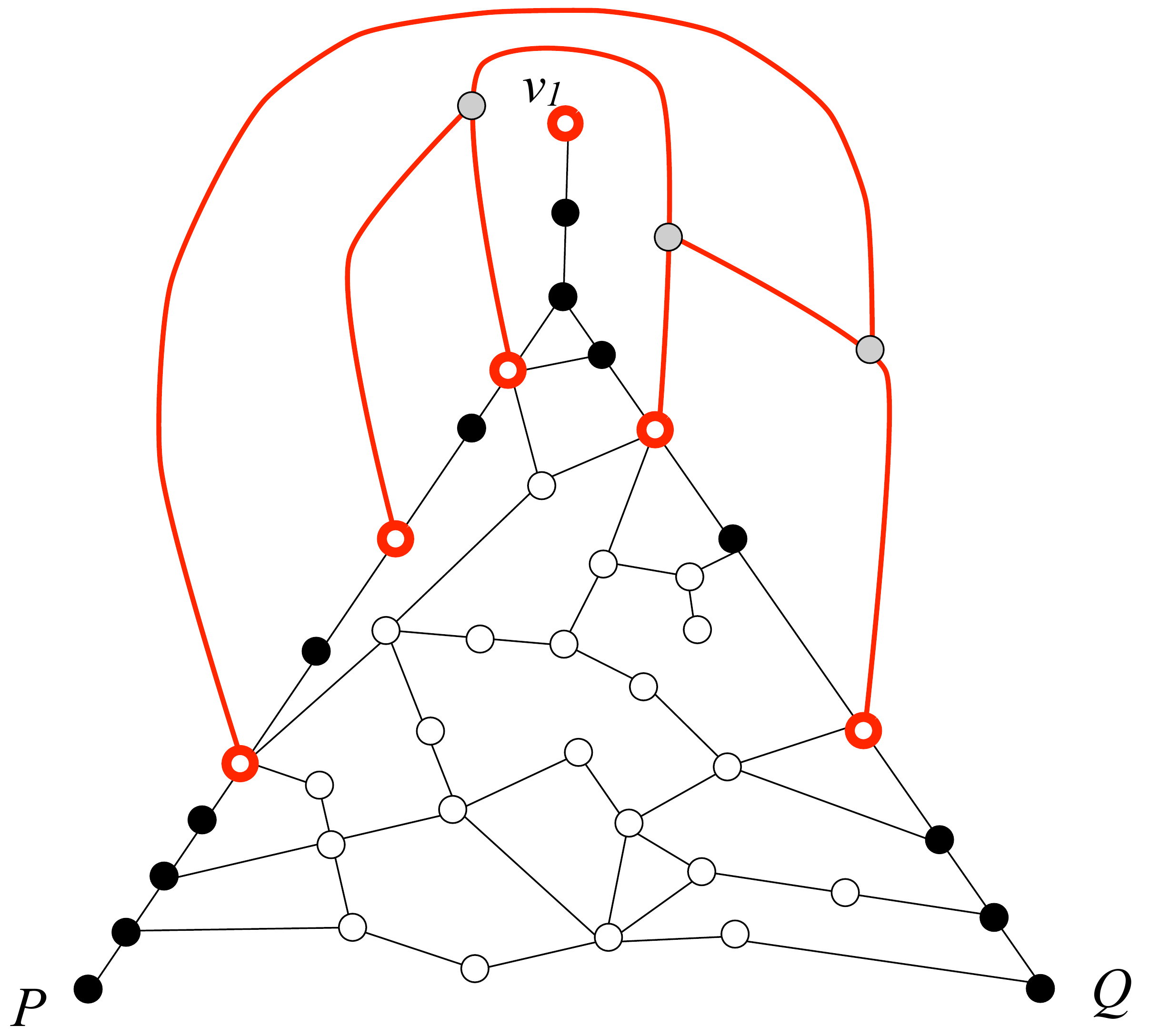}\ \ \ \ \ \ \ \ \ \ \ \ \ \ \ \ \ \ 
    \includegraphics[scale=0.19]{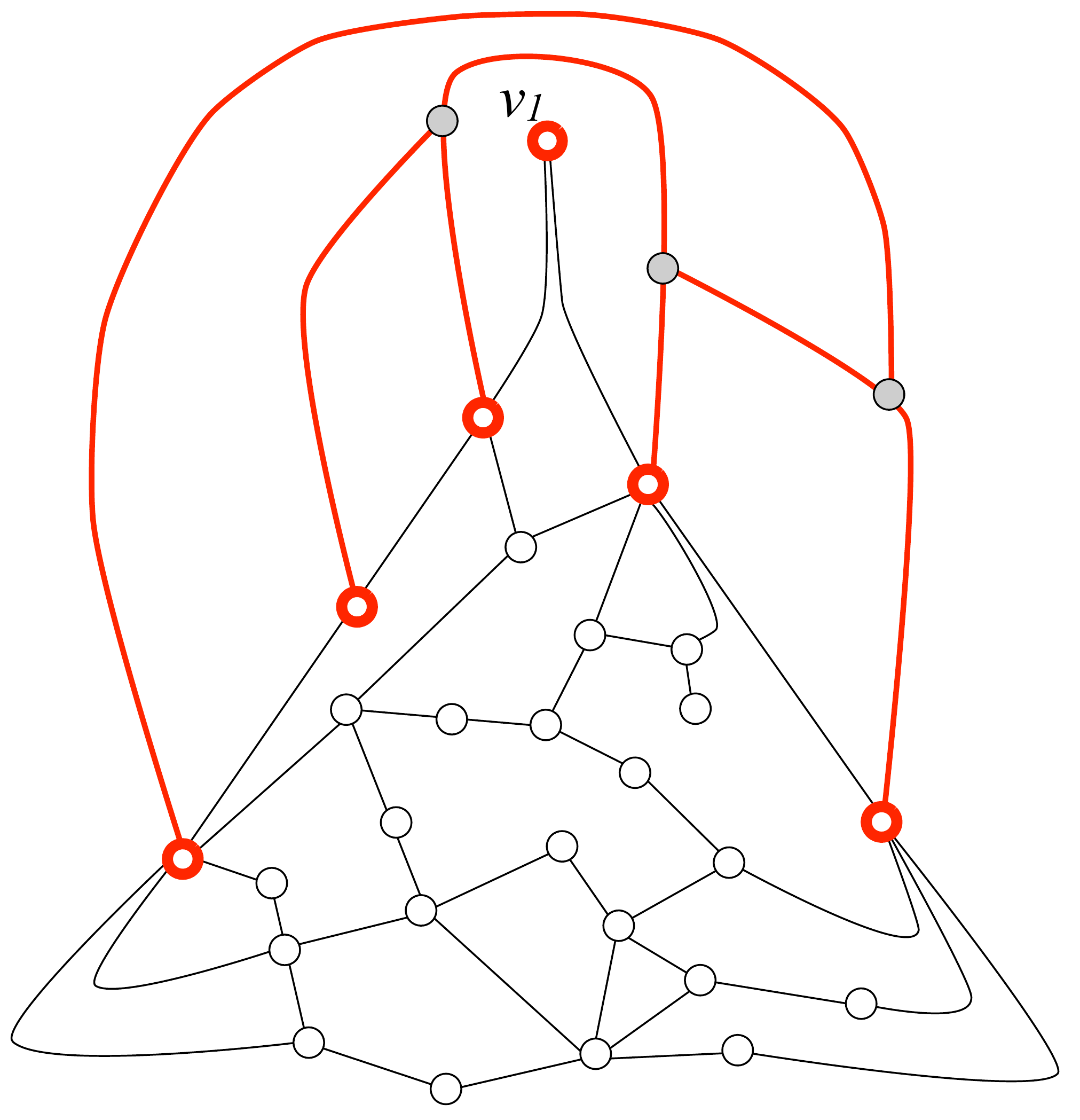}
   \caption{On the left: The graph $G_{in}^+$ before shrinking.   
   The white vertices are the vertices of $A$, the black vertices are the vertices of $C$ that are not dense portals, the six red circled vertices are the $256\log n/\varepsilon$ dense portals, and the gray vertices are the vertices of $B'\setminus C$ with degree $>2$.  On the right: The graph $G_{in}^+$ after shrinking. 
  The edges adjacent to vertices of $C$ that are not dense portals are now replaced with edges to dense portals.}
 \end{figure}

\paragraph{\bf Phase II: shrinking ${\boldsymbol G_{in}^+}$.}
The problem with the current $G_{in}^+$ is still that the size of $V(G_{in}^+)$ is not necessarily bounded by $2/3 \cdot  |V(G)|$. This is because $C$ (that is part of $V(G_{in}^+)$) can be as large as $n$. 
We now show how to shrink $V(G_{in}^+)$ to size  $2/3 \cdot  |V(G)|$ while $d(G_{in},G_{in},G_{in}^+)$ remains a $(1+\varepsilon/(2\log n))$-approximation of $d(G_{in},G_{in},G)$. 
To achieve this, we shrink the $C$ part of $V(G_{in}^+)$ so that it only includes the dense portals. We show how to shrink $P$, shrinking $Q$ is done similarly.

Consider two dense portals $v_i$ and $v_{i+1}$ on $P$ (i.e., $v_i$ is the closest portal to $v_{i+1}$ on the path $P$ towards $v_1$). 
We want to eliminate all vertices of $P$ between $v_i$ and $v_{i+1}$. Denote these vertices by $p_1,\ldots,p_k$. If $v_i$ is the last portal of $P$ (i.e., $i=128\log n$), then $p_1,\ldots,p_k$ are all the vertices between $v_i$ and the end of $P$.
Recall that $A$ is the subgraph of $G$ induced by all vertices in the strict interior of the separator $C$.
Fix a planar embedding of $G_{in}^+$.
We perform the following process as long as there is some vertex $u$ in $Q \cup A$ which is a neighbor of some $p_j$, and which is on some face
of the embedding that also contains $v_i$.
We want to ``force'' any shortest path that goes through an edge $(u,p_j)$ to also go through the dense portal $v_i$.
To this end, we delete all such edges $(u,p_j)$, and instead insert a single edge $(u,v_i)$ of length $\min_j\{ \ell(u,p_j) +\delta_G(p_j,v_i)\}$.
Here, $\ell(u,p_j)$ denotes the length of the edge $(u,p_j)$ (it may be that $\ell(u,p_j)=\infty$ if $(u,p_j)$ is not an edge) and $\delta_G(p_j,v_i)$
denotes the length of the $p_j$-to-$v_i$ subpath of $P$.
It is important to observe that the new edge $(u,v_i)$ can be embedded while maintaining the planarity since we have chosen $u$ to be on the same face
as $v_i$. Observe that once the process ends, the vertices $p_j$ have no neighbors in $Q \cup A$.

Finally, we replace the entire $v_{i+1}$-to-$v_i$ subpath of $P$ with a single edge  $(v_{i+1},v_i)$ whose  length is equal to  the entire subpath length. If $v_i$ is the last dense portal in $P$, then we simply delete the entire subpath between $v_i$ and the end of $P$.
The entire shrinking process takes only linear time in the size of $|V(G)|$ since it is linear in the number of edges of $G_{in}^+$
(which is a planar graph).

The following Lemma asserts that after the shrinking phase $d(G_{in},G_{in},G_{in}^+)$ can serve as a $(1+\varepsilon/(2\log n))$-approximation to $d(G_{in},G_{in},G)$.

\begin{lemma}\label{lemma:G_in^+}
$d(G_{in},G_{in},G) \le d(G_{in},G_{in},G_{in}^+) \le d(G_{in},G_{in},G) + \varepsilon  \x /(2\log n)$
\end{lemma}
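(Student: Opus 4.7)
The statement comprises two inequalities and I address them separately.

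For the lower bound $d(G_{in},G_{in},G) \le d(G_{in},G_{in},G_{in}^+)$, my plan is to show that every edge of $G_{in}^+$ has an interpretation as a walk in $G$ of the same length, so that any $u$-to-$v$ path in $G_{in}^+$ lifts to a $u$-to-$v$ walk of the same length in $G$. In detail: edges inherited from $G_{in}$ keep their original lengths; the shortcut edges $(v_i, v_{i+1})$ introduced in Phase~II have length equal to the corresponding subpath of $P$ (or $Q$) in $G$; each replacement edge $(u, v_i)$ created by the shrinking step has length $\min_j\{\ell(u, p_j) + \delta_G(p_j, v_i)\}$, which is the length of the walk $u \to p_j \to v_i$ in $G$ for the minimizing~$j$; and each edge of $B'$ was defined as the length of a shortest dense-portal-to-dense-portal path in $G_{out} \subseteq G$. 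Concatenating these interpretations yields the lower bound.

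For the upper bound, I would fix marked vertices $u, v \in G_{in}$ and consider the shortest $u$-to-$v$ path $P_G$ in $G$, splitting into two cases. In Case~1, $P_G$ lies entirely in $G_{in}$. Here I mimic $P_G$ in $G_{in}^+$, modifying only the segments where $P_G$ uses non-dense-portal vertices of $C$: each maximal such excursion onto $P$ (respectively $Q$) is replaced by a Phase~II redirected entry edge, a walk along dense portals via shortcut edges, and a Phase~II redirected exit edge. A short telescoping calculation along $P$ (using the spacing bound $\delta_G(c,\lambda(c)) \le \varepsilon \x/(16\log n)$ for any non-dense-portal $c \in C$) shows that the excess length of such an excursion is at most $2\varepsilon \x/(16\log n)$, independent of how many dense portals it spans. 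In Case~2, $P_G$ uses some vertex outside $G_{in}$, and (as quoted in the excerpt) decomposes as $u \xrightarrow{\sigma_1} p \xrightarrow{\tau} q \xrightarrow{\sigma_2} v$ with $p\in P$, $q \in Q$, $\tau$ a shortest path in $G_{out}$, and $\sigma_i$ shortest paths in $G_{in}$. I then route $u$ to $\lambda(p)$ by modifying $\sigma_1$ as in Case~1, use the $B'$-edge corresponding to the shortest $\lambda(p)$-to-$\lambda(q)$ path in $G_{out}$, and route $\lambda(q)$ to $v$ by modifying $\sigma_2$ similarly; the triangle inequality inside $G_{out}$ bounds the extra length of the middle piece by $2\varepsilon \x/(16\log n)$.

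The crucial combinatorial input, and the main obstacle, is ensuring that the uniqueness of shortest paths forces the $P$-vertices (respectively $Q$-vertices) visited by $P_G$ to form a single contiguous subpath: given any two vertices of $P$ on $P_G$, the subpath of $P_G$ between them has the same length as the corresponding subpath of the shortest path $P$ and must therefore coincide with it. Consequently $P_G$ triggers at most one $P$-excursion and at most one $Q$-excursion, and the same holds for $\sigma_1, \sigma_2$ in Case~2. Summing up, the total additive error across all (at most four) excursions plus the at most one $B'$ detour is bounded by $\varepsilon \x/(2\log n)$, which is what we need. The remaining delicate point is verifying that the Phase~II shrinking process does produce the redirected edges $(a, v_i)$ we invoke in the telescoping calculation, which follows from the stated invariant that after shrinking the $p_j$ have no neighbors in $Q \cup A$.
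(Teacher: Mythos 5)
Your overall strategy matches the paper's: the lower bound via lifting each edge of $G_{in}^+$ to a walk of the same length in $G$, and the upper bound via a case analysis on how $P_G$ meets $C$, using the dense portals, the Phase~II redirected edges, and the $B'$ shortest paths (your Case~1 merges the paper's Cases~1 and~3, and your contiguity-by-uniqueness observation is exactly what the paper uses implicitly to get the five-piece decomposition $u\!-\!p\!-\!p'\!-\!q'\!-\!q\!-\!v$). However, there is one genuine gap: you invoke the spacing bound $\delta_G(c,\lambda(c)) \le \varepsilon \x/(16\log n)$ ``for any non-dense-portal $c \in C$,'' and this is false in general. Only $128\log n/\varepsilon$ dense portals are placed on each of $P$ and $Q$, so they cover only a prefix of length roughly $8\x$; beyond the last dense portal, $\delta_G(c,\lambda(c))$ is unbounded. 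Moreover $P$ and $Q$ genuinely can be much longer than $8\x$, because they are shortest paths in the current subgraph $G$ whose far endpoints need not be marked, so their lengths are not controlled by the diameter estimate $\x$.

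The missing step, which the paper proves explicitly, is that every vertex of $C$ that the shortest path $P_G$ between \emph{marked} $u,v$ can touch must lie in the $8\x$ prefix: by the invariant, $\delta_G(v_1,u) \le 2\x(1+\varepsilon)$ and $\delta_G(u,c) \le \delta_G(u,v) \le 2\x(1+\varepsilon)$, so $\delta_G(v_1,c) \le 4\x(1+\varepsilon) < 8\x$, and since $P$ is a shortest path this bounds $c$'s position along $P$. Without this argument your telescoping estimates have no footing. Separately, your final accounting (``at most four excursions plus one $B'$ detour,'' each costing $2\varepsilon\x/(16\log n)$) sums to $10\varepsilon\x/(16\log n)$, which overshoots the target $8\varepsilon\x/(16\log n) = \varepsilon\x/(2\log n)$; the contiguity argument you already state in fact limits you to one $P$-segment and one $Q$-segment on all of $P_G$, which brings the total back under the bound, but as written the arithmetic does not close.
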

\begin{proof}

First observe that $d(G_{in},G_{in},G_{in}^+) \ge d(G_{in},G_{in},G)$. This is because every vertex of $G_{in}$ that is marked in $G$ is also a marked vertex in $G_{in}^+$, and any shortest $u$-to-$v$ path in $G_{in}^+$ corresponds to an actual $u$-to-$v$ path in $G$.

We now show that $d(G_{in},G_{in},G_{in}^+) \le d(G_{in},G_{in},G) + \varepsilon  \x /(2\log n)$. 
Let $P^+$ denote the shortest $u$-to-$v$  path in $G_{in}^+$ realizing $d(G_{in},G_{in},G_{in}^+)$. 
Both $u$ and $v$ are  marked vertices in $G_{in}$ and the length of $P^+$ is  $\delta_{G_{in}^+}(u,v)$.
Let $P_G$ denote the shortest  $u$-to-$v$ path in $G$ that is of length  $\delta_{G}(u,v)$. \vspace{0.1in}

\noindent {\bf Case 1:} If $P_G$ does not include any vertex of $C$, then $P_G$ is also present in $G_{in}^+$ and therefore $d(G_{in},G_{in},G_{in}^+) \le d(G_{in},G_{in},G)$. \vspace{0.1in}

\noindent{\bf Case 2:} If $P_G$ includes vertices that are not in $G_{in}$ (i.e., vertices in $G_{out}\setminus C$), then $P_G$ must be composed of a  shortest $u$-to-$p$  path ($p\in P$) in $G_{in}$, a  shortest $p$-to-$q$  path ($q\in Q$) in $G_{out}$, and a shortest $q$-to-$v$  path in $G_{in}$.

We first claim that $p$ must be a vertex in the prefix of $P$ of length $8\x$ (a similar argument holds for $q$ and $Q$). Assume the converse, then the  prefix of $P$ from $v_1$ (the first vertex of both $P$ and $Q$) to $p$ is of length at least $8\x$. 
Recall that we have the invariant that in every recursive level for every pair of marked vertices $\delta_G(u,v) \le (1+\varepsilon)\cdot \delta_{\G}(u,v)\le 2\x\cdot (1+\varepsilon)$. For the same reason we know that $\delta_G(v_1,u) \le 2\x\cdot (1+\varepsilon)$. 
Since $P$ is a shortest path in $G$, this means that $\delta_G(v_1,p) \ge 8\x$.
However, consider the  $v_1$-to-$p$ path composed of the $v_1$-to-$u$ shortest path (of length $\delta_G(v_1,u)\le 2\x\cdot (1+\varepsilon)$) concatenated with the $u$-to-$p$ shortest path (of length $\delta_G(u,p)\le \delta_G(u,v) \le  2\x \cdot (1 +\varepsilon)$). Their total length is 
$4\x\cdot (1+\varepsilon)$ which is less than $8\x$ (since $\varepsilon <1$) thus contradicting our assumption.

We now show that $\delta_{G_{in}^+}(u,v)\le \delta_{G}(u,v) +\varepsilon \x/(2\log n)$. For $c\in P$ (resp. $c\in Q$), let $\lambda(c)$ denote the first dense portal encountered while walking from $c$ towards $v_1$ on the path $P$ (resp. $Q$). Notice that since $p$ and $q$ are in the $8\x$ prefixes of $P$ and $Q$  we have that $\delta_G(p,\lambda(p)) \le \varepsilon \x/(16\log n)$ and $\delta_G(q,\lambda(q)) \le \varepsilon \x/(16\log n)$. 
From the shrinking phase, it is easy to see that $G_{in}^+$ includes a $u$-to-$\lambda(p)$ path of length 
$\delta_G(u,p) + \delta_G(p,\lambda(p))$ and so 
$\delta_{G_{in}^+}(u,\lambda(p)) \le \delta_G(u,p) + \varepsilon \x/(16\log n)$. Similarly,  $\delta_{G_{in}^+}(\lambda(q),v)  \le \delta_G(q,v) + \varepsilon \x/(16\log n)$. Furthermore, since $G_{in}^+$ was appended with shortest paths  between dense portals in $G_{out}$ we have  
 $ \delta_{G_{in}^+}(\lambda(p),\lambda(q))\le  \delta_{G_{out}}(\lambda(p),p) +  \delta_{G_{out}}(p,q) + \delta_{G_{out}}(q,\lambda(q)) =\delta_{G}(\lambda(p),p) +  \delta_{G_{out}}(p,q) + \delta_{G}(q,\lambda(q)) \le  \delta_{G_{out}}(p,q) + \varepsilon \x/(8\log n)$. 
To conclude we get that 
\begin{align*}
d(G_{in},G_{in},G_{in}^+) &= \delta_{G_{in}^+}(u,v)  \\
&\le \delta_{G_{in}^+}(u,\lambda(p)) + \delta_{G_{in}^+}(\lambda(p),\lambda(q))  + \delta_{G_{in}^+}(\lambda(q),v) \\ 
&\le  \delta_G(u,p) +  \delta_{G_{out}}(p,q) + \delta_G(q,v) + \varepsilon \x /(4\log n)\\
&=   \delta_G(u,v) + \varepsilon \x/(4\log n) \\
&\le d(G_{in},G_{in},G) + \varepsilon  \x /(4\log n)\\
&< d(G_{in},G_{in},G) + \varepsilon  \x /(2\log n).
\end{align*}

\noindent {\bf Case 3:} Finally, we need to consider the case where $P_G$ includes only vertices of $G_{in}$. We assume  $P_G$ includes vertices of $P$ and$\setminus$or 
vertices of $Q$ (otherwise this was handled in Case 1). We focus on the case that  $P_G$ includes vertices of both  $P$ and $Q$. The case that  $P_G$ includes vertices of one of $P$ or $Q$ follows immediately using a similar argument. 

Since $P$ and $Q$ are shortest paths, then $P_G$ must be composed of the following shortest paths: a $u$-to-$p$ path ($p\in P$) in $G_{in}$, a $p$-to-$p'$ subpath ($p'\in P$)  of $P$, a $p'$-to-$q'$ path ($q'\in Q$) in $G_{in}$, a $q'$-to-$q$  subpath ($q\in Q$) of $Q$, and a $q$-to-$v$ path in $G_{in}$.
Following the same argument as in Case 2, we know that $p$ and $p'$ (resp. $q$ and $q'$) must in the prefix of $P$ (resp. $Q$) of length $8\x$. This means $\delta_G(c,\lambda(c)) \le \varepsilon \x/(16\log n)$ for every $c\in \{p,p',q,q'\}$.

From the shrinking phase, it is easy to see that $G_{in}^+$ includes a $u$-to-$\lambda(p)$ path of length 
$\delta_G(u,p) + \delta_G(p,\lambda(p))$ and so 
$\delta_{G_{in}^+}(u,\lambda(p)) \le \delta_G(u,p) + \varepsilon \x/(16\log n)$. Similarly, we have  that  $\delta_{G_{in}^+}(\lambda(q),v)  \le \delta_G(q,v) + \varepsilon \x/(16\log n)$, and we have  that  $\delta_{G_{in}^+}(\lambda(p'),\lambda(q'))  \le  \delta_G(\lambda(p'), p') +  \delta_G(p', q') +\delta_G(q', \lambda(q')) \le \delta_G(p', q') + \varepsilon \x/(8\log n)$.
Furthermore, since subpaths of $P$ in $G_{in}^+$ between dense portals capture their exact distance in $G$ we have that $\delta_{G_{in}^+}(\lambda(p),\lambda(p')) \le \delta_G(\lambda(p),p) +\delta_G(p, p') + \delta_G(p', \lambda(p')) \le \delta_G(p, p') + \varepsilon \x/(8\log n)$ and similarly  $\delta_{G_{in}^+}(\lambda(q'),\lambda(q)) \le \delta_G(q', q) + \varepsilon \x/(8\log n)$.
To conclude we get that 
\begin{align*}
& d(G_{in},G_{in},G_{in}^+) = \delta_{G_{in}^+}(u,v)  \\
&\le \delta_{G_{in}^+}(u,\! \lambda(p)) \!+\! \delta_{G_{in}^+}(\lambda(p),\! \lambda(p'))  \!+\! \delta_{G_{in}^+}(\lambda(p'),\!\lambda(q')) \!+\! \delta_{G_{in}^+}(\lambda(q'),\! \lambda(q))\!+\! \delta_{G_{in}^+}(\lambda(q),\! v)   \\ 
&\le  \delta_G(u,p) +\delta_G(p, p') + \delta_G(p', q') + \delta_G(q', q) + \delta_G(q,v) + \varepsilon \x/(2\log n) \\
&=   \delta_G(u,v) + \varepsilon \x/(2\log n) \\
&< d(G_{in},G_{in},G) + \varepsilon  \x /(2\log n).
\end{align*}
\qed
\end{proof}

\begin{corollary}\label{col}
If $d(G_{in},G_{in},G)\ge \x$, then 
$d(G_{in},G_{in},G_{in}^+) $ is a $(1+\varepsilon/(2\log n))$-approximation  of $d(G_{in},G_{in},G)$.
If  $d(G_{in},G_{in},G) < \x$, then  $d(G_{in},G_{in},G_{in}^+) \le (1+\varepsilon/(2\log n)) \cdot \x$. 
\end{corollary}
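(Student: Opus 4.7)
The plan is to derive the corollary directly from Lemma~\ref{lemma:G_in^+}, which already supplies the two-sided bound
\[
d(G_{in},G_{in},G) \;\le\; d(G_{in},G_{in},G_{in}^+) \;\le\; d(G_{in},G_{in},G) + \varepsilon \x/(2\log n).
\]
So nothing new about the graph construction is needed; the only work is to convert the additive slack $\varepsilon \x/(2\log n)$ into the multiplicative statement in the two cases of the corollary.

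For the first case, assume $d(G_{in},G_{in},G) \ge \x$. Substituting $\x \le d(G_{in},G_{in},G)$ into the upper bound from the lemma gives
\[
d(G_{in},G_{in},G_{in}^+) \;\le\; d(G_{in},G_{in},G) + \frac{\varepsilon}{2\log n}\cdot d(G_{in},G_{in},G) \;=\; \Bigl(1+\frac{\varepsilon}{2\log n}\Bigr)\,d(G_{in},G_{in},G),
\]
and combined with the lower bound $d(G_{in},G_{in},G_{in}^+) \ge d(G_{in},G_{in},G)$ from the lemma, this is exactly the claimed $(1+\varepsilon/(2\log n))$-approximation.

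For the second case, assume $d(G_{in},G_{in},G) < \x$. Then the lemma's upper bound gives
\[
d(G_{in},G_{in},G_{in}^+) \;\le\; d(G_{in},G_{in},G) + \frac{\varepsilon \x}{2\log n} \;<\; \x + \frac{\varepsilon \x}{2\log n} \;=\; \Bigl(1+\frac{\varepsilon}{2\log n}\Bigr)\x,
\]
which is the stated bound. There is essentially no obstacle here — the corollary is a one-line accounting step on top of Lemma~\ref{lemma:G_in^+}, mirroring exactly the way Lemma~\ref{lemma:G_t} was used earlier to handle the symmetric ``diameter below threshold $\x$'' case for $d(G_{in},G_{out},G)$.
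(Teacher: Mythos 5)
Your derivation is correct and is exactly the intended (and only sensible) route: the paper states the corollary without proof because it follows from Lemma~\ref{lemma:G_in^+} by precisely the two substitutions you perform ($\x \le d(G_{in},G_{in},G)$ in the first case, $d(G_{in},G_{in},G) < \x$ in the second). Nothing is missing.
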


\noindent By the above corollary, approximating $d(G_{in},G_{in},G)$ when $d(G_{in},G_{in},G) \ge \x$ reduces to approximating $d(G_{in},G_{in},G_{in}^+)$. 
When $d(G_{in},G_{in},G) < \x$ it means that the diameter of the original graph $\G$ is {\em not} a  $(u\in G_{in})$-to-$(v\in G_{in})$ path and will thus be approximated in a different recursive call.

Finally, notice that indeed we maintain the invariant that the distance between any two {\em marked} vertices in the recursive call to $G_{in}^+$  is a $(1+\varepsilon)$-approximation of the distance in the original graph $\G$. This is because, by the above corollary, every recursive call adds a $1+\varepsilon/(2\log n)$
factor to the approximation. Each recursive call decreases the input size by a factor of $(2/3+o(1))^{-1}$.
Hence, the overall depth of the recursion is at most $\log_{1.5-o(1)}n < 1.8 \log n$. Since
$$
(1+\varepsilon/(2 \log n))^{1.8 \log n} < e^{0.9 \varepsilon} < 1+ \varepsilon
$$
the invariant follows (we assume in the last inequality that  $\varepsilon \le 0.1$). 
Together with the $(1+ 5\varepsilon)$-approximation for $d(G_{in},G_{out},\G)$ in the original graph $\G$, we get a
$(1+5\varepsilon)\cdot (1+\varepsilon)\le (1+7\varepsilon)$-approximation of $d(G_{in},G_{in},\G)$ in the original graph $\G$, 
once we apply recursion to $d(G_{in},G_{in},G_{in}^+)$.

We note that our recursion halts once $|G_{in}^+|\le (256\log n /\varepsilon)^4$ in which case we naively compute $d(G_{in},G_{in},G_{in}^+)$ using APSP in time $O(|G_{in}^+|^2)$.
Recall that even at this final point, the distances between marked vertices still obey the invariant.

\subsection{Running time}
We now examine the total running time of our algorithm.
Let $n$ denote the number of vertices in our original graph $\G$ and let $V(G)$ denote the vertex set of the graph $G$ in the current invocation of
the recursive algorithm. The current invocation approximates $d(G_{in},G_{out},G_t)$ as shown in subsection \ref{ss:gt} in time
$O(|V(G)| / \varepsilon +2^{O(1/\varepsilon)})$.
It then goes on to construct the subgraphs $G_{in}^+$ and $G_{out}^+$ as shown in subsection \ref{ss:gplus}, where we have that
after contraction using dense portals, $|V(G_{in}^+)| = \alpha |V(G)|+O(\log^4 n/\varepsilon^4)$ and $|V(G_{out}^+)| = \beta|V(G)|+O(\log^4 n/\varepsilon^4)$,
where $\alpha,\beta \le 2/3$ and $\alpha+\beta \le 1$. The time to construct $|V(G_{in}^+)|$ and $|V(G_{out}^+)|$ is dominated by the time required to compute SSSP for each dense portal,
which requires $O(|V(G)|\cdot\log n/\varepsilon)$. We then continue recursively to $G_{in}^+$ and to $G_{out}^+$.
Hence, if $T(|V(G)|)$ denotes the running time for $G$, then we get that 
\begin{eqnarray*}
T(|V(G)|) & = & O\big(|V(G))|\cdot\log n/\varepsilon +2^{O(1/\varepsilon)}\big)\\
& + & T\big(\alpha|V(G)|+O(\log^4 n/\varepsilon^4)\big)\\
& + & T\big(\beta|V(G)|+O(\log^4 n/\varepsilon^4)\big).
\end{eqnarray*}

\noindent In the recursion's halting condition, once we get to components of size $|V(G)| = (256\log n /\varepsilon)^4$, we naively run APSP. This takes $O(|V(G)|^2)$ time for each such component, and there are $O(n/|V(G)|)$ such components, so the total time is $O(n\cdot|V(G)|)= O(n\log^4 n /\varepsilon^4)$. 
It follows that
$$
T(n) = O(  n \log^4 n/\varepsilon^4 +n\cdot 2^{O(1/\varepsilon)}).
$$

\section{Concluding Remarks}

We presented the first $(1+\varepsilon)$-factor approximation algorithm for the diameter of an undirected planar graph with non-negative edge lengths.
Moreover, it is the first algorithm that provides a nontrivial (i.e. less than $2$-factor) approximation in near-linear time.

It might still be possible to slightly improve the running time of our algorithm by removing a logarithmic factor, or by replacing the exponential dependency on $\varepsilon$
with a polynomial one. In addition, the technique of Abraham and Gavoille~\cite{AG06} which generalizes shortest-path separators to the class of H-minor free  graphs
may also turn out to be useful.
%
%

\bibliographystyle{plain}
\bibliography{planar-diam}

\begin{thebibliography}{10}

\bibitem{AG06}
I.~Abraham and C.~Gavoille.
\newblock Object location using path separators.
\newblock In {\em Proc. of the 25th Annual ACM Symposium on Principles of
  Distributed Computing {(PODC)}}, pages 188--197, 2006.

\bibitem{Aingworth}
D.~Aingworth, C.~Chekuri, P.~Indyk, and R.~Motwani.
\newblock Fast estimation of diameter and shortest paths (without matrix
  multiplication).
\newblock {\em SIAM Journal on Computing}, 28(4):1167--1181, 1999.

\bibitem{WADS}
P.~Berman and S.P Kasiviswanathan.
\newblock Faster approximation of distances in graphs.
\newblock In {\em Proc. of the 10th International Workshop on Algorithms and
  Data Structures {(WADS)}}, pages 541--552, 2007.

\bibitem{5}
K.~Boitmanis, K.~Freivalds, P.~Ledins, and R.~Opmanis.
\newblock Fast and simple approximation of the diameter and radius of a graph.
\newblock In {\em Proc. of the 5th International Workshop on Experimental
  Algorithms {(WEA)}}, pages 98--108, 2006.

\bibitem{Chordal}
V.~Chepoi and F.F. Dragan.
\newblock A linear-time algorithm for finding a central vertex of a chordal
  graph.
\newblock In {\em Proc. of the 2nd Annual European Symposium On Algorithms
  (ESA)}, pages 159--170, 1994.

\bibitem{Chung}
F.~R.~K. Chung.
\newblock Diameters of graphs: Old problems and new results.
\newblock {\em Congressus Numerantium}, 60:295--317, 1987.

\bibitem{13}
F.F. Dragan, F.~Nicolai, and A.~Brandstadt.
\newblock Lex{BFS}-orderings and powers of graphs.
\newblock In {\em Proc. of International Workshop on Graph-Theoretic Concepts
  in Computer Science (WG)}, pages 166--180, 1996.

\bibitem{Eppstein}
D.~Eppstein.
\newblock Subgraph isomorphism in planar graphs and related problems.
\newblock In {\em Proc. of the 6th Annual {ACM}-{SIAM} symposium on Discrete
  algorithms ({SODA})}, pages 632--640, 1995.

\bibitem{16}
A.M. Farley and A.~Proskurowski.
\newblock Computation of the center and diameter of outerplanar graphs.
\newblock {\em Discrete Applied Mathematics}, 2:185--191, 1980.

\bibitem{Frederickson}
G.~N. Frederickson.
\newblock Fast algorithms for shortest paths in planar graphs.
\newblock {\em SIAM Journal on Computing}, 16:1004--1022, 1987.

\bibitem{Han}
Y.~Han and T.~Takaoka.
\newblock An ${O}(n^3 \log\log n/\log^2 n)$ time algorithm for all pairs
  shortest paths.
\newblock In {\em Proc. of the 13th Scandinavian Symposium and Workshops on
  Algorithm Theory {(SWAT)}}, pages 131--141, 2012.

\bibitem{HartvigsenMardon}
D.~Hartvigsen and R.~Mardon.
\newblock The all-pairs min cut problem and the minimum cycle basis problem on
  planar graphs.
\newblock {\em Journal of Discrete Mathematics}, 7(3):403--418, 1994.

\bibitem{HKRS97}
M.~R. Henzinger, P.~N. Klein, S.~Rao, and S.~Subramanian.
\newblock Faster shortest-path algorithms for planar graphs.
\newblock {\em Journal of Computer and System Sciences}, 55(1):3--23, 1997.

\bibitem{Johnson}
D.B. Johnson.
\newblock Efficient algorithms for shortest paths in sparse graphs.
\newblock {\em Journal of the ACM}, 24:1--13, 1977.

\bibitem{Klein05}
P.~N. Klein.
\newblock Multiple-source shortest paths in planar graphs.
\newblock In {\em Proc. of the 16th Annual {ACM}-{SIAM} Symposium On Discrete
  Mathematics ({SODA})}, pages 146--155, 2005.

\bibitem{Klein2002}
P.N. Klein.
\newblock Preprocessing an undirected planar network to enable fast approximate
  distance queries.
\newblock In {\em Proc. of the 13th Annual {ACM}-{SIAM} Symposium On Discrete
  Mathematics ({SODA})}, pages 820--827, 2002.

\bibitem{LiptonTarjan79}
R.J Lipton and R.E. Tarjan.
\newblock A separator theorem for planar graphs.
\newblock {\em SIAM Journal on Applied Math}, 36:177--189, 1979.

\bibitem{23}
S.~Olariu.
\newblock A simple linear-time algorithm for computing the center of an
  interval graph.
\newblock {\em International Journal of Computer Mathematics}, 34:121--128,
  1990.

\bibitem{pettie-2002}
Seth Pettie.
\newblock A faster all-pairs shortest path algorithm for real-weighted sparse
  graphs.
\newblock In {\em Proc. of the 29th International Colloquium on Automata,
  Languages and Programming (ICALP)}, pages 85--97, 2002.

\bibitem{Thorup04}
M.~Thorup.
\newblock Compact oracles for reachability and approximate distances in planar
  digraphs.
\newblock {\em Journal of the ACM}, 51(6):993--1024, 2004.
\newblock Announced at FOCS 2001.

\bibitem{WN-diam}
C.~Wulff-Nilsen.
\newblock Wiener index, diameter, and stretch factor of a weighted planar graph
  in subquadratic time.
\newblock Technical report, University of Copenhagen, 2008.

\end{thebibliography}

\end{document}